\newcommand{\romanenumi}{\renewcommand{\theenumi}{({\it\roman{enumi}})}
   \renewcommand{\labelenumi}{\theenumi}}
\newcommand{\dom}[1][]{\ifthenelse{\equal{#1}{}}{\overline{D}}{\overline(D)(#1)}}
\newlength{\wordlength}
\def\nd{3em} 	
\def\ra{0.75em} 
\tikzset{node distance = \nd, 
		vertex/.style={circle,draw,minimum size=2*\ra, inner sep=1pt}
		}
\newcommand{\set}[1]{\{#1\}}
\newtheorem{remark}{Remark}
\title{An Ordinal Minimax Theorem}
\author{
Felix Brandt\\
TU M\"unchen\\Germany
\and
Markus Brill\\
Duke University\\USA
\and
Warut Suksompong\\
Stanford University\\USA
}
\date{}
\begin{document}
	
	\maketitle
	
\begin{abstract}
In the early 1950s Lloyd Shapley proposed an ordinal and set-valued solution concept for zero-sum games called \emph{weak saddle}. We show that all weak saddles of a given zero-sum game are interchangeable and equivalent. As a consequence, every such game possesses a unique set-based value. 
\end{abstract}

 \noindent\textbf{JEL Classifications Code: }C72

\section{Introduction}

One of the earliest solution concepts considered in game theory are \emph{saddle points}, combinations of actions such that no player can gain by deviating~\citep[see, \eg][]{vNM47a}. In two-player zero-sum games, every saddle point happens to coincide with the optimal outcome both players can guarantee in the worst case and thus enjoys a very strong normative foundation. Unfortunately, however, saddle points are not guaranteed to exist. This situation can be rectified by the introduction of \emph{mixed}---\ie randomized---strategies, as first proposed by \citet{Bore21a}. \Citet{vNeu28a} proved that every zero-sum game contains a mixed saddle point, or equilibrium. While equilibria need not be unique, they maintain two appealing properties of saddle points: \emph{interchangeability} (any combination of equilibrium strategies for either player forms an equilibrium) and \emph{equivalence} (all equilibria yield the same expected payoff).

Mixed equilibria have been criticized for resting on demanding epistemic assumptions such as the expected utility axioms by \citet{vNM47a}. See, for example, \citet[][pp.~74--76]{LuRa57a} and \citet{Fish78b}.
As \citeauthor{Auma87a} puts it:
 ``When randomized strategies are used in a strategic game, payoff must be replaced by expected payoff. Since the game is played only once, the law of large numbers does not apply, so it is not clear why a player would be interested specifically in the mathematical expectation of his payoff''~\citep[][p.~63]{Auma87a}.

 \citet{Shap53a,Shap53b} showed that the existence of saddle points can also be guaranteed by moving to \emph{minimal sets} of actions rather than randomizations over them.\footnote{The main results of the 1953 reports later reappeared in revised form~\citep{Shap64a}.}  \citeauthor{Shap53a} defines a \emph{generalized saddle point (GSP)} to be a tuple of subsets of actions for each player that satisfies a simple external stability condition:
 Every action not contained in a player's subset is dominated by some action in the set, given that the other player chooses actions from his set. A GSP is minimal if it does not contain another GSP. Minimal GSPs, which Shapley calls \emph{saddles}, also satisfy internal stability in the sense that no two actions within a set dominate each other, given that the other player chooses actions from his set.
 While Shapley was the first to conceive GSPs, he was not the only one. Apparently unaware of Shapley's work, \citet{Samu92a} uses the very related concept of a \emph{consistent pair} to point out epistemic inconsistencies in the concept of iterated weak dominance. Also, \emph{weakly admissible sets} as defined by \citet{McOr76a} in the context of spatial voting games and the \emph{minimal covering set} as defined by \citet{Dutt88a} in the context of majority tournaments are GSPs \citep{DuLe96a}.\footnote{GSPs have also been considered in the context of general normal-form games \citep[see, \eg][]{DuLe96b,BBFH09a,BBFH10c,BrBr12b}}

In this paper, we consider GSPs with respect to weak dominance. An action weakly dominates another action if it always yields at least as much utility.
\citet[][p.~10]{Shap64a} notes that no general uniqueness result is available for this type of saddle. Later, uniqueness has been shown for restricted classes of zero-sum games, namely tournament games \citep{Dutt88a} and confrontation games  \citep{DuLe96a}.
We show that all weak saddles of a given zero-sum game are interchangeable and equivalent. This implies the above-mentioned uniqueness results and shows that every zero-sum game possesses a unique set-based value. Our result can be interpreted as an ordinal variant of the minimax theorem.

\section{Preliminaries}

A finite two-player zero-sum game is given by a matrix $A=(a_{i,j})_{i \in R, j \in C}$. The finite set $R$ of rows represents the row player's actions, and the finite set $C$ of columns represents the column player's actions. If the row player chooses action $r\in R$, and the column player chooses action $c\in C$, then the \emph{payoff} (or \emph{utility}) of the row player is given by the entry $a_{r,c}$ of the matrix, while the payoff of the column player is given by $-a_{r,c}$.
For nonempty subsets $R'\subseteq R$ and $C' \subseteq C$, $A|_{R' \times C'}$ denotes the \emph{subgame} in which the row player has action set $R'$ and the column player has action set $C'$. 

An action $r_1 \in R$  \emph{weakly dominates} another action $r_2 \in R$ with respect to a set $C' \subseteq C$ of columns, denoted $r_1 \ge_{C'} r_2$, if $a_{r_1,c} \ge a_{r_2,c}$ for all $c \in C'$.\footnote{What we call weak dominance here is sometimes also called \emph{very weak dominance} \citep[see, \eg][]{LeSh08a}. \label{fn:terminology}} Similarly, an action $c_1 \in C$  \emph{weakly dominates} another action $c_2 \in C$ with respect to a set $R' \subseteq R$ of rows, denoted $c_1 \le_{R'} c_2$, if $-a_{r,c_1} \ge -a_{r,c_2}$ (and thus $a_{r,c_1} \le a_{r,c_2}$) for all $r \in R'$. \emph{Strict dominance} is defined analogously, with the weak inequalities replaced by strict inequalities. 

Dominance relations can be extended to sets of actions as follows. 
A set $R_1$ of rows \textit{weakly} (resp. \emph{strictly}) \emph{dominates} a set $R_2$ of rows with respect to $C'\subseteq C$ if for every row $r_2\in R_2$, there exists a row $r_1\in R_1$ such that $r_1$ weakly (resp. strictly) dominates $r_2$ with respect to $C'$. We denote this by $R_1 \geq_{C'} R_2$ (resp. $R_1>_{C'} R_2$). Dominance between sets of columns is defined analogously, and denoted $C_1 \leq_{R'} C_2$ (for weak dominance) and $C_1 <_{R'} C_2$ (for strict dominance).

We are now prepared to define \emph{saddles}, which are based on the notion of a \textit{generalized saddle point (GSP)} \citep{Shap53a,Shap53b,Shap64a}. 
Given a subset $R' \subseteq R$ of rows and a subset $C' \subseteq C$ of columns, the product $R' \times C'$ is a \textit{weak GSP} if $R' \ge_{C'} R\backslash R'$ and $C' \le_{R'} C\backslash C'$. Furthermore, the product $R'\times C'$ is a \textit{weak saddle} if it is a weak GSP and no proper subset of it is a weak GSP.\footnote{Weak saddles have been called \emph{very weak saddles} by \citet{BBFH10c}; see also Footnote \ref{fn:terminology}. In some papers \citep[\eg][]{DuLe96a,DuLe01a,BBFH09a,BBFH10c}, the dominance used for weak saddles requires at least one strict inequality. In the context of confrontation games (see \coref{cor:confrontationunique}), where weak saddles have usually been considered, both notions of weak saddles coincide. \cite{Shap53a,Shap53b,Shap64a} defines weak saddles as we do here. It is easily seen that our theorem does not hold for weak saddles that require at least one strict inequality (see, for example, the restriction to the first two rows and columns of game $A_2$ in \figref{fig:examples}).} Strict GSPs and strict saddles are defined analogously. 

\begin{figure}[htb]
\begin{align*}
A_1 = \left( \begin{array}{ccccc}
2 & 1 & 0 & 1 & 2 \\
0 & 3 & 4 & 4 & 1 \\
0 & 2 & 2 & 1 & 2 \\
2 & 1 & 0 & 2 & 1 
\end{array} \right)
\hspace{1cm}
A_2 = \left( \begin{array}{rrr}
0 & 0 & 0 \\
0 & 1 & -1 \\
0 & -1 & 1
\end{array} \right)
\hspace{1cm}
A_3 = \left( \begin{array}{ccccc}
2 & 2 & 1 & 3 & 2 \\
2 & 4 & 0 & 0 & 2 \\
1 & 3 & 3 & 4 & 1 \\
2 & 3 & 1 & 3 & 2 \\
1 & 0 & 2 & 2 & 0
\end{array} \right)
\end{align*}
\caption{
Three example zero-sum games. For each game, the rows and columns are labeled $r_1,r_2,\ldots$ and $c_1,c_2,\ldots$, respectively. The game $A_1$ contains one weak saddle: $\set{r_1,r_2} \times \set{c_1,c_2,c_3}$.
The game $A_2$ contains a saddle point $\set{r_1} \times \set{c_1}$. This saddle point is the unique pure Nash equilibrium and the unique weak saddle of this game. Moreover, $(\frac{1}{2}r_2 + \frac{1}{2}r_3,\frac{1}{2}c_2 + \frac{1}{2}c_3)$ is a (mixed) Nash equilibrium of~$A_2$. 
The game $A_3$ contains four weak saddles: 
$\set{r_1, r_3} \times \set{c_1,c_3}$, 
$\set{r_1, r_3} \times \set{c_3,c_5}$,
$\set{r_3, r_4} \times \set{c_1,c_3}$, and
$\set{r_3, r_4} \times \set{c_3,c_5}$.
For all three games, the product of all rows and all columns is the unique strict saddle.
}
\label{fig:examples}
\end{figure}

In contrast to strict saddles, weak saddles are extensions of saddle points in the sense that every saddle point constitutes a weak saddle. 
Since the product $R \times C$ containing all actions is a trivial weak and strict GSP of any game, weak and strict saddles are guaranteed to exist. While strict saddles have been shown to be unique in zero-sum games (see \coref{cor:strictunique}), this is not the case for weak saddles. 
It is noteworthy that saddles generally \emph{cannot} be found by the iterated elimination of (weakly or strictly) dominated actions.\footnote{While the subgames generated by iteratively eliminating dominated strategies are GSPs, these GSPs need not be minimal.}
See \figref{fig:examples} for examples.

\section{The Result}

In this section, we prove that weak saddles in zero-sum games are interchangeable and equivalent. We begin with a lemma.

\begin{lemma}
\label{lemma:extstableexpand}
Consider a zero-sum game $A$ with row set $R$ and column set $C$. Let $R_1\subseteq R_2\subseteq R$ and $C_1\subseteq C_2\subseteq C$. Suppose that $R_2\times C_2$ is a weak GSP. Then, $R_1\times C_1$ is a weak GSP if and only if $R_1\times C_1$ is a weak GSP in $A|_{R_2\times C_2}$.
\end{lemma}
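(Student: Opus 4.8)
The plan is to unfold both notions of weak GSP and observe that the only difference lies in the collections of actions that have to be dominated. By definition, $R_1\times C_1$ is a weak GSP in $A$ iff $R_1\ge_{C_1} R\backslash R_1$ and $C_1\le_{R_1} C\backslash C_1$, whereas it is a weak GSP in $A|_{R_2\times C_2}$ iff $R_1\ge_{C_1} R_2\backslash R_1$ and $C_1\le_{R_1} C_2\backslash C_1$. In both cases the dominating sets are $R_1$ and $C_1$ and the reference sets are $C_1$ and $R_1$; only the sets of actions required to be dominated change, shrinking from $R\backslash R_1$ to $R_2\backslash R_1$ (and symmetrically for columns) when passing to the subgame.

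The ``only if'' direction is then immediate and uses nothing about $R_2\times C_2$: since $R_1\subseteq R_2$ and $C_1\subseteq C_2$, we have $R_2\backslash R_1\subseteq R\backslash R_1$ and $C_2\backslash C_1\subseteq C\backslash C_1$, so a set that weakly dominates all of $R\backslash R_1$ a fortiori dominates the subset $R_2\backslash R_1$, and likewise for the columns.

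For the ``if'' direction, which is where the hypothesis that $R_2\times C_2$ is a weak GSP becomes essential, I would verify the row condition $R_1\ge_{C_1}R\backslash R_1$ and obtain the column condition by the symmetric argument. Fix an arbitrary $r\in R\backslash R_1$; I need some $r_1\in R_1$ with $r_1\ge_{C_1}r$. If $r\in R_2$, then $r\in R_2\backslash R_1$ and the subgame GSP property supplies such an $r_1$ directly. If instead $r\in R\backslash R_2$, I would first apply external stability of $R_2\times C_2$ to find $r_2\in R_2$ with $r_2\ge_{C_2}r$; restricting attention to the smaller column set $C_1\subseteq C_2$ gives $r_2\ge_{C_1}r$. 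If $r_2$ already lies in $R_1$ we are done; otherwise $r_2\in R_2\backslash R_1$ and the subgame property yields $r_1\in R_1$ with $r_1\ge_{C_1}r_2$, whence $r_1\ge_{C_1}r$.

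The two facts that make the last step work, and the main points to get right, are that weak dominance with respect to a fixed column set is transitive and that dominance with respect to $C_2$ implies dominance with respect to any subset $C_1\subseteq C_2$; both are immediate from the pointwise-inequality definition of $\ge_{C'}$. I expect the chief difficulty to be purely organizational: one must chain the two external-stability guarantees—first from $R\backslash R_2$ into $R_2$ using the GSP $R_2\times C_2$, then from $R_2\backslash R_1$ into $R_1$ using the subgame GSP—while keeping the reference column set pinned at $C_1$ throughout, and then mirror the entire argument for the column player.
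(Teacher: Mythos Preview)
Your proof is correct and follows essentially the same route as the paper: the ``only if'' direction is immediate from $R_2\backslash R_1\subseteq R\backslash R_1$, and for the ``if'' direction you split on whether $r\in R_2$ or $r\in R\backslash R_2$, in the latter case chaining the external stability of $R_2\times C_2$ (restricted from $C_2$ to $C_1$) with that of $R_1\times C_1$ in the subgame via transitivity of $\ge_{C_1}$. Your extra sub-case ``if $r_2$ already lies in $R_1$'' is a harmless refinement the paper absorbs into the general step.
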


\begin{proof}
The ``only if'' part follows straightforwardly from the definitions. For the ``if'' part, suppose that $R_1\times C_1$ is a weak GSP in $A|_{R_2\times C_2}$. We will show that $R_1\geq_{C_1} R\backslash R_1$; the argument for column domination is analogous. Consider an arbitrary row $r\in R\backslash R_1$. If $r\in R_2\backslash R_1$, then since $R_1\times C_1$ is a weak GSP in $A|_{R_2\times C_2}$, there exists a row $r'\in R_1$ such that $r'\geq_{C_1} r$. Otherwise, we have $r\in R\backslash R_2$. Since $R_2\times C_2$ is a weak GSP in $A$, there exists a row $r'\in R_2$ such that $r'\geq_{C_2} r$, and in particular $r'\geq_{C_1} r$. But since $R_1\times C_1$ is a weak GSP in $A|_{R_2\times C_2}$, there exists a row $r''\in R_1$ such that $r''\geq_{C_1} r'$. It follows that $r''\geq_{C_1} r$, and hence $R_1\geq_{C_1} R\backslash R_1$.
\end{proof}

We are now ready to prove the main theorem.

\begin{theorem}
\label{thm:ordinalminimaxthm}
Let $A$ be a zero-sum game with weak saddles $R_1\times C_1$ and $R_2\times C_2$. Then the following statements are true.
\begin{enumerate}\romanenumi
\item $R_1\times C_2$ and $R_2\times C_1$ are also weak saddles (interchangeability).
\item The subgame $A|_{R_2 \times C_2}$ can be derived from $A|_{R_1 \times C_1}$ by permuting the rows and columns (equivalence). In particular, $|R_1|=|R_2|$ and $|C_1|=|C_2|$, and the multisets of entries of $A|_{R_1\times C_1}$ and $A|_{R_2\times C_2}$ are the same.
\end{enumerate}
\end{theorem}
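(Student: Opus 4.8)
The plan is to reduce both parts of the theorem to a pair of \emph{cross-domination} statements and then harvest them with elementary antichain arguments. First I would record the four ``adjacent'' dominations that fall out of the two products being weak GSPs: from $R_1\times C_1$ we get $R_1\ge_{C_1}R_2$ and $C_1\le_{R_1}C_2$, and from $R_2\times C_2$ we get $R_2\ge_{C_2}R_1$ and $C_2\le_{R_2}C_1$, using external stability restricted to $R_2\setminus R_1$ (resp.\ $C_2\setminus C_1$) together with reflexivity on the intersection. I would also extract internal stability from minimality: for distinct $r,r'\in R_1$ we cannot have $r\ge_{C_1}r'$, since otherwise deleting $r'$ leaves a smaller weak GSP (row external stability is inherited through $r$, and column stability only weakens when the row set shrinks), contradicting that $R_1\times C_1$ is a saddle; the same holds for $C_1$ w.r.t.\ $R_1$, and for $R_2,C_2$. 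Thus each of $R_1,R_2,C_1,C_2$ is an antichain under the relevant restricted dominance.

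Next I would show that interchangeability comes down to two cross-dominations, namely (A)~$R_1\ge_{C_2}R_2$ and (B)~$C_2\le_{R_1}C_1$. Indeed, to see that $R_1\times C_2$ is a weak GSP I only need $R_1\ge_{C_2}R\setminus R_1$ and $C_2\le_{R_1}C\setminus C_2$. For the row condition, any $r\in R\setminus R_1$ is first dominated w.r.t.\ $C_2$ by some element of $R_2$ (using $R_2\ge_{C_2}R\setminus R_2$, and handling $r\in R_2$ directly via (A)), and then (A) pushes that domination into $R_1$; the column condition follows symmetrically from (B) together with $C_1\le_{R_1}C\setminus C_1$. Once $R_1\times C_2$ is known to be a weak GSP, I would re-run the internal-stability argument (via Lemma~\ref{lemma:extstableexpand}) to conclude it is minimal, hence a weak saddle, and the claim for $R_2\times C_1$ is symmetric.

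The main obstacle is proving the cross-dominations (A) and (B) themselves. The naive attempt---chain a $C_1$-domination among rows with the column alignment $C_1\le_{R_1}C_2$ to convert $C_1$-entries into $C_2$-entries---always fails at the same point, because that alignment is guaranteed only on the rows $R_1$, whereas the row one needs to control is the foreign row $r_2\notin R_1$; the symmetric mismatch blocks (B). I therefore expect (A) and (B) to require minimality in a genuinely global way rather than single-step chaining, and my plan is to prove them together by induction on $|R|+|C|$: using Lemma~\ref{lemma:extstableexpand} to pass to $A|_{R_2\times C_2}$, or to a game with one row or column outside both saddles removed, where the inductive hypothesis supplies an entry-preserving alignment of $C_1$ with $C_2$ that is valid precisely on the rows where the naive argument broke down. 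The delicate point is verifying that deleting a row or column outside the saddles preserves minimality, which is exactly the role the restriction lemma plays.

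Finally, for equivalence I would exploit that (A) and (B) upgrade the adjacent dominations into \emph{mutual} ones: $C_1\le_{R_1}C_2$ and $C_2\le_{R_1}C_1$ hold simultaneously, as do $R_2\ge_{C_2}R_1$ and $R_1\ge_{C_2}R_2$. Two sets that dominate each other with respect to a fixed opposing set and are each antichains under that restricted dominance must biject with entrywise-equal lines, since the back-and-forth maps given by the two dominations compose to the identity---any strict improvement would violate internal stability. Applying this to the columns over the common row set $R_1$ yields a column permutation identifying $A|_{R_1\times C_1}$ with $A|_{R_1\times C_2}$, and applying it to the rows over the common column set $C_2$ yields a row permutation identifying $A|_{R_1\times C_2}$ with $A|_{R_2\times C_2}$. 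Composing the two permutations carries $A|_{R_1\times C_1}$ onto $A|_{R_2\times C_2}$, giving $|R_1|=|R_2|$, $|C_1|=|C_2|$, and equality of the multisets of entries.
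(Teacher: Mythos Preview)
Your reduction of interchangeability to the two cross-dominations (A)~$R_1\ge_{C_2}R_2$ and (B)~$C_2\le_{R_1}C_1$ is sound, and your derivation of equivalence from (A), (B) and their symmetric counterparts via the antichain argument is essentially correct (a small point: to get $\psi\circ\phi=\mathrm{id}$ on $C_2$ you should run the mutual-domination argument over $R_2$, where $C_2$ is known to be an antichain, and then combine the two injections). The genuine gap is in your plan for proving (A) and (B).

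Your induction on $|R|+|C|$ breaks down precisely in the case you cannot avoid: when $R=R_1\cup R_2$ and $C=C_1\cup C_2$ with neither saddle contained in the other. Removing a row or column outside both saddles is then impossible, and ``passing to $A|_{R_2\times C_2}$'' does not help, since $R_1\times C_1$ is generally not contained in $R_2\times C_2$ and there is no reason for $(R_1\cap R_2)\times(C_1\cap C_2)$ to be a weak GSP there. Game $A_3$ in the paper, restricted to $\{r_1,r_3,r_4\}\times\{c_1,c_3,c_5\}$, is already an instance of this configuration. So the inductive hypothesis never becomes available in the hard case, and nothing in your outline supplies the ``entry-preserving alignment of $C_1$ with $C_2$ valid on the foreign rows'' that you correctly identify as what is missing from the naive chaining.

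The paper does not induct on the game size at all. It fixes domination maps $f_1\colon R_2\to R_1$, $f_2\colon R_1\to R_2$, $g_1\colon C_2\to C_1$, $g_2\colon C_1\to C_2$, shows that if all four are bijections then a sum-of-entries cycle forces every dominated line to equal its dominator (giving both (i) and (ii) at once), and then rules out the non-bijective case by constructing a \emph{proper} sub-GSP of $R_1\times C_1$. That construction is the real content: one chains the inequalities $x_{f_1(i),g_2^{-1}(j)}\ge y_{i,j}\ge x_{f_2^{-1}(i),g_1(j)}$ along paths in the functional digraphs of $f_2\circ f_1$ and $g_2\circ g_1$, and shows that the rows and columns indexed by the cycle nodes of these digraphs already form a weak GSP inside $A|_{R_1\times C_1}$; non-surjectivity of $f_1$ makes it strictly smaller, contradicting minimality. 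This cycle argument is exactly the global use of minimality that your outline gestures at but does not provide.

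A secondary gap: even granting (A) and (B), ``re-running the internal-stability argument'' does not yield minimality of $R_1\times C_2$; internal stability only rules out deleting a single line. You would need to argue, e.g., that any weak saddle inside $R_1\times C_2$ must, by the equivalence part applied against $R_1\times C_1$, have the same dimensions and hence coincide with $R_1\times C_2$.
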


\begin{proof}
The general proof structure is as follows. We first show both statements for the case in which every row of one saddle dominates some row of the other saddle with respect to the columns of the first saddle (and similarly for the columns). We then consider the more difficult case in which a saddle contains an ``idle'' action that does not dominate any action of the other saddle. It turns out that this violates minimality of the saddle and is therefore not possible.
	
Suppose that $R_1\times C_1$ and $R_2\times C_2$ are weak saddles, with $|R_1|=p_1, |R_2|=p_2, |C_1|=q_1$, and $|C_2|=q_2$. Since $R_1\times C_1$ is a weak GSP, for each row $r_2\in R_2$ there exists a row $r_1\in R_1$ such that $r_1\geq_{C_1} r_2$. (If $r_2\in R_1\cap R_2$, then $r_2\geq_{C_1} r_2$.) Hence there exists a function $f_1:[p_2]\rightarrow[p_1]$ such that 
$f_1(i)\in R_1$ and $f_1(i)\geq_{C_1} i$ for every row $i\in R_2$, where $[n]$ denotes the set $\{1,\ldots,n\}$. Similarly, there exist functions $f_2:[p_1]\rightarrow[p_2]$, $g_1:[q_2]\rightarrow [q_1]$, and $g_2:[q_1]\rightarrow[q_2]$ such that 
$f_2(i)\in R_2$ and $f_2(i)\geq_{C_2} i$ for every row $i\in R_1$, 
$g_1(j)\in C_1$ and $g_1(j)\leq_{R_1} j$ for every column $j\in C_2$, and 
$g_2(j)\in C_2$ and $g_2(j)\leq_{R_2} j$ for every column $j\in C_1$. 

Suppose that $f_1,f_2,g_1$, and $g_2$ are all bijections (which in particular implies that $p_1=p_2$ and $q_1=q_2$.) Then the rows in $R_2\times C_1$ are dominated by the rows in $R_1\times C_1$, one by one. Hence $\sum_{i\in R_1, j\in C_1} a_{ij} \geq \sum_{i\in R_2, j\in C_1} a_{ij}$. By putting together these inequalities for $R_1\times C_1$, $R_2\times C_1$, $R_2\times C_2$, and $R_1\times C_2$, we get 
\[\sum_{i\in R_1, j\in C_1} a_{ij} \geq \sum_{i\in R_2, j\in C_1} a_{ij} \geq \sum_{i\in R_2, j\in C_2} a_{ij} \geq \sum_{i\in R_1, j\in C_2} a_{ij} \geq \sum_{i\in R_1, j\in C_1} a_{ij}\text{.}\] 
It follows that equality holds everywhere and that rows are only dominated by identical rows and that columns are only dominated by identical columns. As a consequence, $R_2\times C_2$ can be derived from $R_1\times C_1$ by permuting the rows and columns. Moreover, $R_1\times C_2$ and $R_2\times C_1$ can also be derived from $R_1\times C_1$ by permuting the rows and columns, and one can check that they are saddles in $R\times C$ as well. Hence both (i) and (ii) hold in this case.

Suppose now that at least one of $f_1,f_2,g_1$, and $g_2$ is not a bijection. Then at least one of them is not a surjection. Indeed, if for instance $p_1<p_2$, then $f_2$ is not a surjection. If $p_1=p_2$ and $q_1=q_2$, and any of the functions $f_1,f_2,g_1,g_2$ is not a bijection, then it is also not a surjection. Assume without loss of generality that $f_1$ is not a surjection, \ie $R_1\times C_1$ contains an idle row that does not dominate any row in $R_2$ with respect to $C_1$. 
We will show that $R_1\times C_1$ is not inclusion-minimal, \ie there exists a proper subset $R_1'\times C_1'\subset R_1\times C_1$ such that $R_1'\times C_1'$ is a weak GSP. But since $R_1\times C_1$ is a weak GSP, by Lemma~\ref{lemma:extstableexpand} we only need to show that there exists a proper subset $R_1'\times C_1'\subset R_1\times C_1$ such that $R_1'\times C_1'$ is a weak GSP in $A|_{R_1\times C_1}$.

\begin{figure}[htb]
	\centering
	\scalebox{0.4}{
	\begin{tikzpicture}
		[
		cell/.style={rectangle,draw=black,fill=black!20,thick,inner sep=1.2em},
		greater/.style={->, shorten >=1pt, >=triangle 60,semithick}
		]

	  \draw (0,0) -- (12,0) -- (12,-12) -- (0,-12) -- (0,0);
	  \draw (0,-6.5) -- (7,-6.5) -- (7,0);
	  \draw (5,-12) -- (5,-5) -- (12,-5);
	  
  	\draw [decorate,decoration={brace,amplitude=20pt,mirror},xshift=-50pt,yshift=0pt]
  	(-1,0) -- (-1,-6.5) node [black,midway,xshift=-1.5cm] 
  	{\Huge $R_1$};
	
  	\draw [decorate,decoration={brace,amplitude=20pt},xshift=50pt,yshift=0pt]
  	(13,-5) -- (13,-12) node [black,midway,xshift=1.5cm] 
  	{\Huge $R_2$};

  	\draw [decorate,decoration={brace,amplitude=20pt},xshift=0pt,yshift=30pt]
  	(0,1) -- (7,1) node [black,midway,yshift=1.5cm] 
  	{\Huge $C_1$};
	
  	\draw [decorate,decoration={brace,amplitude=20pt,mirror},xshift=0pt,yshift=-30pt]
  	(5,-13) -- (12,-13) node [black,midway,yshift=-1.5cm] 
  	{\Huge $C_2$};
	
	\draw (-1,-1.5) node{\Huge $i'$};
	\draw (-1.5,-3) node{\Huge $f_1(i)$}
	++(0.5,-5) node{\Huge $i$};
	
	 \draw (2,1) node{\Huge $j$}
	 ++(2.5,0) node{\Huge $g_1(j')$}
	 ++(6,0) node{\Huge $j'$};
	 
	 \draw (2,-13) node{\Huge $j$}
	 ++(6.5,0) node{\Huge $g_2(j)$};
	 \draw (10.5,-13) node{\Huge $j'$};
	 
	 \draw (13,-1.5) node{\Huge $i'$}
	 ++(0.5,-9) node{\Huge $f_2(i')$};
	 \draw (13,-8) node{\Huge $i$};

	 \draw (10.5,-13) node{\Huge $j'$};

	 \draw (2,-3) node[cell](a){}
	 ++(0,-5) node[cell](b){}
	 ++(6.5,0) node[cell](c){};

	 \draw (10.5,-10.5) node[cell](d){}
	 ++(0,9) node[cell](e){}
	 ++(-6,0) node[cell](f){};

	 \draw [greater] (a) to (b); 
	 \draw [greater] (b) to (c); 
	 \draw [greater] (d) to (e); 
	 \draw [greater] (e) to (f); 
	 
	\end{tikzpicture}
	} 
	\caption{Construction in the proof of \thmref{thm:ordinalminimaxthm}. An arrow from one entry to another indicates that the former is greater than or equal to the latter.}
	\label{fig:cycle}
\end{figure}

We index the entries of $A|_{R_1\times C_1}$ by $(x_{i,j})_{i \in [p1], j \in [q1]}$, and the entries of $A|_{R_2\times C_2}$ by $(y_{i,j})_{i \in [p2], j \in [q2]}$. We have $x_{f_1(i),j}\geq y_{i,g_2(j)}$ for all $(i,j)\in [p_2]\times [q_1]$ and $x_{i,g_1(j)}\leq y_{f_2(i),j}$ for all $(i,j)\in [p_1]\times [q_2]$. Hence for all $j\in [q_2]$ and all $k\in[q_1]$ such that $g_2(k)=j$, we have $x_{f_1(i),k}\geq y_{i,j}$ for all $i\in [p_2]$. 
Define $x_{f_1(i),g_2^{-1}(j)}:=\{x_{f_1(i),k}\mid g_2(k)=j\}$, and for two sets $S,T$, write $S\geq T$ if and only if $s\geq t$ for all $s\in S$ and $t\in T$. Using this notation, $x_{f_1(i),g_2^{-1}(j)}\geq y_{i,j}$ for all $(i,j)\in [p_2]\times [q_2]$. 
Similarly, we have that $x_{f_2^{-1}(i),g_1(j)}\leq y_{i,j}$ for all $(i,j)\in[p_2]\times [q_2]$. See \figref{fig:cycle} for an illustration.
It follows that 
\begin{equation}
\tag{$\ast$}
x_{f_1(i),g_2^{-1}(j)}\geq x_{f_2^{-1}(i),g_1(j)}
\end{equation}
for all $(i,j)\in[p_2]\times[q_2]$. This inequality will be leveraged later in the proof. 
If $f_2^{-1}(i)=\emptyset$ for some $i$, the inequality is meaningless for that index $i$ and we may remove the index from consideration. A similar statement holds for~$g_2$. We may relabel the remaining indices as $1,\ldots,p_3$ and $1,\ldots,q_3$, respectively, so that $[p_3]=\text{Im}(f_2)$ and $[q_3]=\text{Im}(g_2)$, where $\text{Im}(f)$ is the image of a function $f$. We have $\bigcup_{i\in[p_3]}f_2^{-1}(i)=[p_1]$ and $\bigcup_{j\in [q_3]}g_2^{-1}(j)=[q_1]$.

Consider the product $S=\text{Im}(f_1\circ f_2)\times\text{Im}(g_1\circ g_2)$. Since $f_1$ is not surjective, we have that $S$ is a proper subset of $R_1\times C_1$. Hence it suffices to show that there exists a subset of~$S$ that is a weak GSP in $A|_{R_1\times C_1}$.

We now define a directed graph $\mathcal{G}_R$ as follows.
The nodes of $\mathcal{G}_R$ are given by $1,2,\ldots,p_3$. For each $i$, we include a directed edge from $f_2(f_1(i))$ to $i$. Let $S_R$ be the set of nodes in $\mathcal{G}_R$ that belong to a directed cycle. (A self-loop counts as a directed cycle.) Each node in $\mathcal{G}_R$ has exactly one incoming edge, and any node in $\mathcal{G}_R$ can be reached from a node in $S_R$. Similarly, we define a directed graph $\mathcal{G}_C$ with nodes $1,2,\ldots,q_3$. For each $i$, we include a directed edge from $i$ to $g_2(g_1(i))$. Each node in $\mathcal{G}_C$ has exactly one outgoing edge, and any node in $\mathcal{G}_C$ can reach a node in $S_C$, where $S_C$ is the set of nodes in $\mathcal{G}_C$ that belong to a directed cycle.

Suppose that $\mathcal{G}_R$ contains an edge $i_2\rightarrow i_1$ and $\mathcal{G}_C$ contains an edge $j_1\rightarrow j_2$. Then $f_1(i_1)\in f_2^{-1}(i_2)$ and $g_1(j_1)\in g_2^{-1}(j_2)$.  From ($\ast$), we have $x_{f_2^{-1}(i_1),g_1(j_2)}\leq x_{f_1(i_1),g_2^{-1}(j_2)}$ and $x_{f_2^{-1}(i_2),g_1(j_1)}\leq x_{f_1(i_2),g_2^{-1}(j_1)}$. Since $x_{f_1(i_1),g_1(j_1)}$ belongs to both $x_{f_1(i_1),g_2^{-1}(j_2)}$ and $x_{f_2^{-1}(i_2),g_1(j_1)}$, we have $x_{f_2^{-1}(i_1),g_1(j_2)}\leq x_{f_1(i_1),g_1(j_1)}\leq x_{f_1(i_2),g_2^{-1}(j_1)}$. In particular, we have \[x_{f_2^{-1}(i_1),g_1(j_2)}\leq x_{f_1(i_2),g_2^{-1}(j_1)}\text.\]

Suppose now that $\mathcal{G}_R$ contains edges $i_n\rightarrow i_{n-1}\rightarrow \cdots\rightarrow i_1$ and $\mathcal{G}_C$ contains edges $j_1\rightarrow j_2\rightarrow\cdots\rightarrow j_n$. Then $f_1(i_k)\in f_2^{-1}(i_{k+1})$ and $g_1(j_k)\in g_2^{-1}(j_{k+1})$ for all $k\in [n-1]$. Applying the same argument as in the $n=2$ case repeatedly, we have $x_{f_2^{-1}(i_1),g_1(j_n)}\leq x_{f_1(i_n),g_2^{-1}(j_1)}$.

We claim that $S':=\bigcup_{i\in S_R}f_1(i)\times\bigcup_{j\in S_C}g_1(j)\subseteq S$ is a weak GSP in $A|_{R_1\times C_1}$. To prove this claim, it suffices to consider row domination; column domination follows analogously. 

For each node $x\in S_C$, define $c(x)=g_2(g_1(x))$, \ie $c$ maps a node to its successor in graph $\mathcal{G}_C$. 
We must show that for any $i\in[p_1]$, there exists $j\in S_R$ such that $x_{f_1(j), g_1(k)}\geq x_{i,g_1(k)}$ for all $k\in S_C$. Since $g_1(k)\in g_2^{-1}(c(k))$, it suffices to show that for any $i\in[p_1]$, there exists $j\in S_R$ such that $x_{f_1(j), g_2^{-1}(c(k))}\geq x_{i,g_1(k)}$ for all $k\in S_C$. Moreover, since $\bigcup_{i\in[p_3]}f_2^{-1}(i)=[p_1]$, we only need to show that for any $i\in[p_3]$, there exists $j\in S_R$ such that $x_{f_1(j), g_2^{-1}(c(k))}\geq x_{f_2^{-1}(i),g_1(k)}$ for all $k\in S_C$.

Let $M$ denote the least common multiple of all the cycle lengths in $S_C$. For any positive integer $n$ and any node $k\in S_C$, there exists a path of length $nM-1$ in $S_C$ (and hence in~$\mathcal{G}_C$) that begins with $c(k)$ and ends with $k$. Since every node in $\mathcal{G}_R$ has one incoming edge, for large enough $n'$ there exists a path of length $n'$ in $\mathcal{G}_R$ that begins with some node $j\in S_R$ and ends with $i$. Taking $n''=nM-1$ for large enough $n$, there exists a path of length $n''$ in $\mathcal{G}_C$ that begins with $c(k)$ and ends with $k$, and a path of length $n''$ in $\mathcal{G}_R$ that begins with $j\in S_R$ and ends with $i$. It follows that $x_{f_1(j), g_2^{-1}(c(k))}\geq x_{f_2^{-1}(i),g_1(k)}$. 
Thus $S'$ is a weak GSP, contradicting the minimality of $R_1\times C_1$.
Hence the case in which at least one of $f_1,f_2,g_1$ and $g_2$ is not a bijection cannot occur, and the proof is complete. 
\end{proof}

Every weak saddle $R'\times C'$ of a given zero-sum game $A$ defines a subgame $A|_{R'\times C'}$. It follows from \thmref{thm:ordinalminimaxthm} that all such subgames are identical up to the permutation of rows and columns. As a consequence, $A|_{R'\times C'}$ could be considered the ``essence'' (or set-based value) of $A$.
Moreover, every Nash equilibrium of $A|_{R' \times C'}$ is a Nash equilibrium of $A$. Therefore, every weak saddle contains the support of a Nash equilibrium and the (von Neumann) value of $A|_{R'\times C'}$ is the same as that of the original game $A$.\footnote{However, game $A_2$ in \figref{fig:examples} shows that there can also be Nash equilibria whose support is disjoint from all weak saddles.}

\section{Consequences and Remarks}

\citet{Shap53a} has shown that every zero-sum game contains a unique \emph{strict} saddle. Shapley's proof crucially relies on the minimax theorem (and the interchangeability of minimax strategies).\footnote{Shapley notes that Hyman Bass proved this statement without making reference to the minimax theorem, but Bass's report is unavailable.} Shapley's result can be obtained as a corollary of \thmref{thm:ordinalminimaxthm} by leveraging the interchangeability of weak saddles.

\begin{corollary}[\citealp{Shap53a}] \label{cor:strictunique}
Every zero-sum game contains a unique strict saddle.
\end{corollary}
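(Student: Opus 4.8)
The plan is to separate existence from uniqueness. Existence is immediate: the full product $R\times C$ vacuously satisfies $R \geq_{C} \emptyset$ and $C \leq_{R} \emptyset$ in the strict sense, so it is a strict GSP, and since the family of strict GSPs is finite and nonempty it contains an inclusion-minimal member, i.e.\ a strict saddle. The real content is uniqueness. The natural target is to show that if $R_1\times C_1$ and $R_2\times C_2$ are both strict saddles, then $(R_1\cap R_2)\times(C_1\cap C_2)$ is again a strict GSP. Since this product is contained in both saddles, minimality then forces $R_1\times C_1=(R_1\cap R_2)\times(C_1\cap C_2)=R_2\times C_2$, which is exactly uniqueness.

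The engine driving the external-stability part is that strict dominance is acyclic: a cycle $b_1>_{C^*}b_2>_{C^*}\cdots>_{C^*}b_1$ would give $a_{b_1,c}>a_{b_1,c}$ for any $c\in C^*$. Assuming $C^*:=C_1\cap C_2\neq\emptyset$, I would show every row $r\notin R_1\cap R_2$ is strictly dominated with respect to $C^*$ by some row of $R_1\cap R_2$. Starting from $r$, build a chain: whenever the current row lies outside $R_1$, replace it by a row of $R_1$ strictly dominating it with respect to $C_1$ (hence with respect to $C^*$); whenever it lies in $R_1\setminus R_2$, replace it by a row of $R_2$ strictly dominating it with respect to $C_2$ (hence $C^*$). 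Each replacement strictly increases every entry on the nonempty set $C^*$, so by transitivity all rows in the chain are distinct; finiteness of $R$ forces termination, which can only occur upon reaching $R_1\cap R_2$. The terminal row strictly $C^*$-dominates $r$, and the construction incidentally shows $R_1\cap R_2\neq\emptyset$. The symmetric argument gives column domination, using $R_1\cap R_2\neq\emptyset$ for acyclicity.

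This reduces everything to the one genuinely nontrivial point, which I expect to be the main obstacle: that the intersections $R_1\cap R_2$ and $C_1\cap C_2$ are nonempty. A purely combinatorial attack fails here, and this is precisely where I would \emph{leverage interchangeability}. Each strict saddle is in particular a weak GSP (strict dominance implies weak dominance) and hence contains a weak saddle, which by the discussion following \thmref{thm:ordinalminimaxthm} contains the support of a Nash equilibrium of $A$ of value $v$, the value of $A$. Thus there is an optimal column strategy $y$ with support in $C_1$ and an optimal row strategy $x$ with support in $R_2$; by the interchangeability of optimal strategies in zero-sum games, $(x,y)$ is an equilibrium, so every column in $\mathrm{supp}(y)$ yields exactly $v$ against $x$. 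If $C_1\cap C_2$ were empty, each such column would lie outside $C_2$, hence be strictly dominated with respect to $R_2\supseteq\mathrm{supp}(x)$ by a column of $C_2$ whose entries are strictly smaller on all of $\mathrm{supp}(x)$; that column would yield strictly less than $v$ against $x$, contradicting optimality of $x$. Hence $C_1\cap C_2\neq\emptyset$, and the mirror-image argument (an optimal row strategy supported in $R_1$ paired with an optimal column strategy supported in $C_2$) gives $R_1\cap R_2\neq\emptyset$.

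Putting the three pieces together, $(R_1\cap R_2)\times(C_1\cap C_2)$ has nonempty row and column sets, satisfies both strict external-stability conditions, and is therefore a strict GSP contained in each of the two saddles; minimality then yields $R_1\times C_1=R_2\times C_2$, completing the proof. The delicate step is the nonemptiness of the intersections, and it is exactly the bridge through Nash supports and the equivalence/interchangeability of weak saddles that makes it go through.
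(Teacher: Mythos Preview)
Your argument is correct, and it shares the paper's outer scaffolding: reduce uniqueness to showing that $(R_1\cap R_2)\times(C_1\cap C_2)$ is a strict GSP, handle external stability by an acyclic chain alternating between the two saddles, and isolate nonemptiness of the intersections as the crux. Where you and the paper diverge is precisely at that crux. You establish $C_1\cap C_2\neq\emptyset$ by passing to Nash equilibrium supports inside each weak saddle and then invoking the interchangeability of \emph{optimal mixed strategies}---i.e.\ the minimax theorem. The paper instead applies \thmref{thm:ordinalminimaxthm} directly: from weak saddles $R_1'\times C_1'\subseteq R_1\times C_1$ and $R_2'\times C_2'\subseteq R_2\times C_2$, interchangeability of \emph{weak saddles} makes $R_2'\times C_1'$ itself a weak saddle, and then a purely ordinal chain $r_1\le_{C_1'}r_2<_{C_1'}r_3\le_{C_1'}\cdots$ (alternating between $R_2'\times C_1'$ being a weak saddle and $R_1\times C_1$ being a strict one) forces a contradiction if $R_1\cap R_2=\emptyset$.

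The distinction matters because the paper's stated purpose in presenting this corollary is to recover Shapley's result \emph{without} the minimax theorem---Shapley's own proof already went through interchangeability of minimax strategies, which is essentially what you do. Your appeal to ``the discussion following \thmref{thm:ordinalminimaxthm}'' is cosmetic: the fact that a weak saddle contains the support of a Nash equilibrium follows from the weak-GSP definition alone and does not need the theorem, so your proof is really the classical minimax-based route rather than a consequence of the ordinal result. Both arguments are valid; the paper's buys an entirely order-theoretic derivation, while yours is shorter if one is willing to import von~Neumann.
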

\begin{proof}
Let $R_1\times C_1$ and $R_2\times C_2$ be two distinct strict saddles.
We first show that $R_1\cap R_2\ne \emptyset$ and $C_1\cap C_2\ne \emptyset$. Without loss of generality we may assume for contradiction that $R_1\cap R_2= \emptyset$. Every strict GSP is also a weak GSP and therefore contains a weak saddle.
Let $R_1'\times C_1'$ be a weak saddle such that $R_1'\subseteq R_1$ and $C_1' \subseteq C_1$ and let $R_2'\times C_2'$ be a weak saddle such that $R_2'\subseteq R_2$ and $C_2' \subseteq C_2$. \thmref{thm:ordinalminimaxthm} implies that $R_2'\times C_1'$ is also a weak saddle. Note that $R_2' \cap R_1=\emptyset$. Now let $r_1$ be an arbitrary row in $R_1$. Since $R_2'\times C_1'$ is a weak saddle, there must be $r_2\in R_2'$ such that $r_1 \le_{C_1'} r_2$. Since $R_1\times C_1$ is a strict saddle, there has to be $r_3\in R_1$ such that $r_3 >_{C_1} r_2$ (and hence also $r_3 >_{C_1'} r_2$). The strict inequality implies that $r_3\ne r_1$. Alternating these two arguments, we obtain a sequence of rows $r_1$, $r_2$, $\dots$ such that
\[
r_1 \;\le_{C_1'}\; r_2 \;<_{C_1'}\; r_3 \;\le_{C_1'}\; r_4 \; <_{C_1'} \;\dots
\]
All $r_i$ have to be distinct and finiteness of the game implies that we will eventually find a row $r_k$, which is witness to the fact that either $R_2'\times C_1'$ is not a weak saddle or $R_1\times C_1$ is not a strict saddle, both of which are contradictions.

We thus have that $R_1\cap R_2\ne \emptyset$ and $C_1\cap C_2\ne \emptyset$.
An argument using a chain of strict inequalities (similar to the one above) easily shows that strict GSPs are closed under non-empty intersection, \ie $(R_1\cap R_2)\times (C_1\cap C_2)$ is a strict GSP. Hence, at least one of the two original strict GSPs was not minimal, a contradiction.
\end{proof}

A zero-sum game $A$ with row set $R$ and column set $C$ is \emph{symmetric} if $R=C$ (with a slight abuse of notation) and the payoff matrix $A$ is skew-symmetric. In particular, all entries on the main diagonal of $A$ are zero. A \textit{confrontation game} is a symmetric zero-sum game in which zeroes appear \emph{only} on the main diagonal. 
Strengthening a result by \citet{Dutt88a}, \citet{DuLe96a} have shown that every confrontation game contains a unique weak saddle.\footnote{By contrast, Nash equilibria are not unique in confrontation games \citep{LeBr05a}.} This now also follows as a consequence of Theorem \ref{thm:ordinalminimaxthm}. 

\begin{corollary}[\citealp{DuLe96a}]
\label{cor:confrontationunique}
Every confrontation game contains a unique weak saddle.
\end{corollary}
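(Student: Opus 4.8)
The plan is to combine the skew-symmetry of confrontation games with \thmref{thm:ordinalminimaxthm}. The guiding observation is that, because zeros in a confrontation game occur \emph{only} on the main diagonal (that is, $a_{i,j}=0$ if and only if $i=j$), the number of zero entries of any subgame $A|_{R'\times C'}$ is exactly $|R'\cap C'|$. By the equivalence part of the theorem all weak saddles induce subgames with the same multiset of entries, so they all share the same value of $|R'\cap C'|$; the whole task reduces to pinning down this value and showing that it forces $R'=C'$.

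First I would prove a symmetry lemma: if $R'\times C'$ is a weak GSP, then so is $C'\times R'$. This is an immediate skew-symmetry computation --- a row domination $a_{w,c}\ge a_{u,c}$ (for all $c\in C'$) transposes, via $a_{i,j}=-a_{j,i}$, into a column domination $a_{c,w}\le a_{c,u}$, and symmetrically in the other direction --- so the external stability conditions of $R'\times C'$ turn into exactly those of $C'\times R'$. Since the map $R'\times C'\mapsto C'\times R'$ is an involution, it also preserves inclusion-minimality, and hence carries weak saddles to weak saddles.

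Next, starting from an arbitrary weak saddle $R_1\times C_1$, the symmetry lemma supplies the weak saddle $C_1\times R_1$, and applying part~(i) of \thmref{thm:ordinalminimaxthm} (interchangeability) to this pair shows that $R_1\times R_1$ is a weak saddle. Its subgame is square of size $p:=|R_1|$ and contains exactly $p$ zeros, namely its entire main diagonal. By part~(ii) of \thmref{thm:ordinalminimaxthm} (equivalence), every weak saddle $R'\times C'$ is then square of this same size $p$ and shares the same multiset of entries, hence has exactly $p$ zeros; that is, $|R'\cap C'|=p=|R'|=|C'|$, which forces $R'=C'$. Thus every weak saddle has the form $S\times S$.

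Finally, to conclude uniqueness, suppose $S\times S$ and $T\times T$ are weak saddles. Interchangeability gives that $S\times T$ is also a weak saddle, whence the previous step forces $S=T$. I expect the main obstacle to lie in that third step: it is not a priori evident what the common intersection size $|R'\cap C'|$ is, and the resolution is to manufacture the canonical diagonal saddle $R_1\times R_1$ from skew-symmetry together with interchangeability, whose zero-count $|R_1|$ is transparent and forces the diagonal structure. (Note that this argument genuinely uses that zeros appear \emph{only} on the diagonal; in a general symmetric zero-sum game off-diagonal zeros would break the count, which is exactly why the conclusion is special to confrontation games.)
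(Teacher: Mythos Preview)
Your proposal is correct and follows essentially the same approach as the paper: both use skew-symmetry to pass from a weak saddle $R'\times C'$ to $C'\times R'$, invoke interchangeability to obtain the diagonal saddle $R'\times R'$, and then use the equivalence part of \thmref{thm:ordinalminimaxthm} together with the zero-count $|R'\cap C'|$ to force every weak saddle to be of the form $S\times S$, after which a second application of interchangeability yields uniqueness. The paper organizes this as a two-case contradiction argument whereas you prove ``every weak saddle is diagonal'' as a standalone intermediate step, but the underlying ideas are identical.
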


\begin{proof}
Let $A$ be a confrontation game. A weak saddle $R'\times C'$ is called \textit{symmetric} if $R'=C'$. Hence, a weak saddle $R'\times C'$ is symmetric if and only if every row and every column of $A|_{R' \times C'}$ contains exactly one zero.
	
Assume for contradiction that $A$ has two distinct weak saddles, $R_1\times C_1$ and $R_2\times C_2$. 
We consider the following two cases.

\textit{Case 1}: At least one of the two weak saddles is not symmetric. Assume without loss of generality that $R_1\times C_1$ is not symmetric. Since $A$ is skew-symmetric, $C_1 \times R_1$ is also a weak saddle. By the first part of \thmref{thm:ordinalminimaxthm}, $R_1\times R_1$ is a symmetric weak saddle. Hence, $R_1\times R_1$ contains exactly one zero in each row and each column, while $R_1\times C_1$ does not. This contradicts the second part of \thmref{thm:ordinalminimaxthm}.

\textit{Case 2}: Both weak saddles are symmetric. Then $R_1\times C_2$ is an asymmetric saddle, and we obtain a contradiction in the same way as in Case 1.
\end{proof}

We conclude the paper with three remarks.

\begin{remark}
	If all payoffs of a game are pairwise distinct, strict dominance and weak dominance coincide. Thus, every such zero-sum game contains a unique weak saddle.
\end{remark}

\begin{remark}
	\citet{DuLe01a} defined refinements of (weak and strict) saddles based on \emph{mixed} dominance.  Game $A_2$ in \figref{fig:examples} shows that \thmref{thm:ordinalminimaxthm} does \emph{not} hold for mixed weak saddles.
\end{remark}

\begin{remark}
	A zero-sum game may contain an exponential number of weak saddles \citep{BrBr12b}. Thus, computing \emph{all} weak saddles of a game is not feasible in polynomial time. The computational complexity of finding \emph{some} weak saddle of a zero-sum game is open.\footnote{For two-player games that are not zero-sum, finding a weak saddle has been shown to be NP-hard \citep{BBFH10c}.}
\end{remark}

\section*{Acknowledgements}
This material is based on work supported by Deutsche Forschungsgemeinschaft under grants {BR~2312/7-2} and {BR~2312/10-1}, by a Feodor Lynen research fellowship of the Alexander von Humboldt Foundation, and by the MIT-Germany program. The authors thank Vincent Conitzer and Paul Harrenstein for helpful discussions.

\end{document}